\documentclass[11pt]{article}

\usepackage[a4paper,left=1.1in,right=1.1in,top=1in,bottom=1.2in,footskip=.7in]{geometry}
\usepackage{mathtools}
\mathtoolsset{showonlyrefs=true}
\usepackage{amssymb,amsthm}
\usepackage{newtxtext}
\usepackage[subscriptcorrection]{newtxmath}
\usepackage{graphicx}
\usepackage{booktabs}
\usepackage{natbib}
\usepackage{url}
\usepackage[hypertexnames=false]{hyperref}
\hypersetup{colorlinks=true,allcolors=blue}
\usepackage{caption}
\graphicspath{{Figures/}}

\newcommand{\cY}{\mathcal{Y}}
\newcommand{\papertitle}{Marginal likelihoods for finite-support Huber contamination}
\newenvironment{keywords}{\par\medskip\noindent\textbf{Keywords:} }{\par\medskip}

\newtheorem{proposition}{Proposition}
\newtheorem{theorem}{Theorem}
\newtheorem{corollary}{Corollary}
\theoremstyle{remark}
\newtheorem{remark}{Remark}

\title{\papertitle}
\author{Jaehoan Kim\\
Department of Statistical Science, Duke University\\
Durham, North Carolina 27708, U.S.A.\\
\texttt{jaehoan.kim@duke.edu}}
\date{May 26, 2026}

\begin{document}
\maketitle
\begin{abstract}
For Huber contamination on a known finite sample space, the unrestricted contaminating law is a probability vector on the support atoms, and domination over all measurable subsets reduces to atomwise inequalities.
Placing a Dirichlet prior on this probability vector and a Beta prior on the contamination proportion gives an exact marginal likelihood for the structural parameter after analytic integration of both nuisance quantities.
The likelihood is a finite weighted sum over allocations of the observed counts between the structural and contaminating components.
For fixed support size, this sum and its score can be evaluated by a dynamic program with quadratic cost in the sample size, enabling gradient-based posterior sampling.
\end{abstract}
\begin{keywords}
Bayesian robustness; Dynamic programming; Finite support; Huber contamination; Marginal likelihood.
\end{keywords}

\section{Introduction}

Huber's contamination model \citep{Huber1964,HuberRonchetti2009} writes the data-generating distribution as $P_0=(1-\epsilon)\,P_\theta+\epsilon\,Q$,
where $P_\theta$ is the structural model, $\epsilon\in[0,1]$ is the
contamination proportion, and $Q$ is an unrestricted contaminating
distribution.
The unrestricted law $Q$ is what makes the model useful in the M-open setting \citep{bernardo2009bayesian}: the structural model need not describe all observations.
It is also what makes Bayesian marginalization difficult, because the likelihood involves an infinite-dimensional nuisance law in general.
Existing robust Bayesian strategies often modify the likelihood or
the conditioning rule \citep{BissiriHolmesWalker2016,MillerDunson2019},
or introduce a parametric or nonparametric mixture model for the
contaminating law \citep{VerdinelliWasserman1991,ShotwellSlate2011}.
This note shows that, when the sample space is restricted to a known finite set $\cY=\{z_1,\ldots,z_m\}$, the Huber contamination model admits an exact marginal likelihood for $\theta$ in which both $\epsilon$ and the finite-dimensional form of $Q$ are integrated out
analytically under the ordinary observation likelihood.

On a finite support, the unrestricted contaminating distribution becomes a vector
$q=(q_1,\ldots,q_m)$, and Huber domination over all measurable sets is equivalent to atomwise domination.
The Beta-Dirichlet prior on $(\epsilon,q)$ can then be integrated exactly.
The resulting marginal likelihood is a weighted sum over possible allocations of the observed counts to the structural and contaminating components.
A dynamic program evaluates this sum, and its score, in $O(n^2)$ operations for fixed observed support.

The finite-support setting arises naturally with categorical, ordinal, or
bounded-count data, where contamination problems such as careless
responding and automated or non-human responses are common
\citep{WardMeade2023,BruhlmannPetralitoAeschbachOpwis2020}.
It connects directly to established robust-inference work for discrete and
binomial models, including $M$-estimation for discrete data and robust
fitting of the binomial model
\citep{SimpsonCarrollRuppert1987,RuckstuhlWelsh2001}, and to modern
minimax theory under Huber contamination
\citep{ChenGaoRen2016,ChenGaoRen2018}.
The present paper develops a Bayesian marginal likelihood for finite-support
parametric models by exactly integrating out the contamination proportion and
the finite contaminating probability vector in the Huber model.

\section{Finite-support Huber model}
\label{sec: model}

\subsection{Conditional likelihood}

Consider $n$ independent observations $Y_1,\ldots,Y_n$ drawn from a distribution $P_0$ supported on a known finite set $\cY=\{z_1,\ldots,z_m\}$.
The structural model is $\{P_\theta:\theta\in\Theta\}$ with atom probabilities
\begin{equation}
p_j(\theta)=P_\theta(Y=z_j),
\qquad j=1,\ldots,m.
\label{eqn: structural atom probabilities}
\end{equation}
Huber's contamination model represents the population law as
\begin{equation}
P_0=(1-\epsilon)P_\theta+\epsilon Q,
\label{eqn: huber model}
\end{equation}
where $\epsilon\in[0,1]$ and $Q$ is an unrestricted distribution on $\cY$, with $p_{0j}=P_0(Y=z_j)$.
The mixture representation \eqref{eqn: huber model} is equivalent to the domination condition $P_0(B)\geq(1-\epsilon)P_\theta(B)$ for all measurable $B$: the forward direction holds because $\epsilon Q(B)\geq 0$, and the converse defines $Q=\{P_0-(1-\epsilon)P_\theta\}/\epsilon$ as a nonnegative measure with total mass one.
Since every subset of $\cY$ is a union of atoms, the domination condition reduces to the atomwise inequality $p_{0j}\geq(1-\epsilon)p_j(\theta)$ for every $j$.
When $\epsilon>0$, the residual masses $q_j=\{p_{0j}-(1-\epsilon)p_j(\theta)\}/\epsilon$ are nonnegative and sum to one, defining a probability vector;
conversely, any probability vector $q$ gives a compatible law $P_0$ through $p_{0j}=(1-\epsilon)p_j(\theta)+\epsilon q_j$.
The nuisance vector $q$ is the finite-support form of Huber's unrestricted contaminating distribution.
Conditional on $(\theta,\epsilon,q)$, the data are independent draws from the atom probabilities $(1-\epsilon)p_j(\theta)+\epsilon q_j$.
Writing $N_j=\sum_{i=1}^n 1(y_i=z_j)$ for the observed count at atom~$z_j$, the likelihood is
\begin{equation}
L(\theta,\epsilon,q;y_{1:n})
:=\Pr(Y_1=y_1,\ldots,Y_n=y_n\mid\theta,\epsilon,q)
=\prod_{j=1}^m\{(1-\epsilon)p_j(\theta)+\epsilon q_j\}^{N_j},
\label{eqn: finite support likelihood}
\end{equation}
where $\sum_{j=1}^m N_j=n$.
The grouping uses only that the ordered-sample likelihood on a finite support depends on the data through the sufficient counts~$N=(N_1,\ldots,N_m)$.
In the sequel, likelihoods are understood up to multiplicative constants independent of the unknowns, so the multinomial coefficient $n!/\prod_j N_j!$ is absorbed.

\subsection{Marginal likelihood}
\label{sec: collapsed likelihood}

We assign the nuisance priors
\begin{equation}
q\sim\operatorname{Dirichlet}(\alpha_1,\ldots,\alpha_m),
\qquad
\epsilon\sim\operatorname{Beta}(a,b),
\label{eqn: nuisance priors}
\end{equation}
where $\alpha_j>0$, $a>0$, $b>0$, and $\alpha_0=\sum_{j=1}^m\alpha_j$.
Integrating both nuisance parameters out of \eqref{eqn: finite support likelihood} gives the marginal likelihood
\begin{equation}
\bar L(\theta)
=
\int_0^1\int_{\Delta_{m-1}}
L(\theta,\epsilon,q)\,\pi(q)\,dq\,\pi(\epsilon)\,d\epsilon .
\label{eqn: collapsed likelihood definition}
\end{equation}
The posterior based on $\bar L$ is a standard Bayesian update under the Huber mixture, with the nuisance pair $(\epsilon,q)$ integrated out analytically.

\begin{proposition}\label{prop: collapsed likelihood}
Under \eqref{eqn: nuisance priors}, the marginal likelihood is
\begin{equation}
\bar L(\theta)
=
\sum_{k_1=0}^{N_1}\cdots\sum_{k_m=0}^{N_m}
\Big\{
\prod_{j=1}^m
{N_j\choose k_j}
(\alpha_j)_{k_j}
p_j(\theta)^{N_j-k_j}
\Big\}
\frac{B(a+K,b+n-K)}{B(a,b)(\alpha_0)_K},
\label{eqn: collapsed likelihood expanded}
\end{equation}
where $K=\sum_{j=1}^m k_j$ and $(x)_r=\Gamma(x+r)/\Gamma(x)$.
\end{proposition}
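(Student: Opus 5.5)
We prove the identity by expanding the likelihood \eqref{eqn: finite support likelihood} binomially atom by atom, exchanging the resulting finite sum with the integral in \eqref{eqn: collapsed likelihood definition}, and evaluating the remaining integrals as standard Dirichlet and Beta moments. For each $j$ we write
\begin{equation}
\{(1-\epsilon)p_j(\theta)+\epsilon q_j\}^{N_j}
=\sum_{k_j=0}^{N_j}{N_j\choose k_j}(1-\epsilon)^{N_j-k_j}p_j(\theta)^{N_j-k_j}\epsilon^{k_j}q_j^{k_j}.
\end{equation}
Multiplying over $j$ and collecting powers, with $K=\sum_j k_j$ and $\sum_j N_j=n$, the summand indexed by $(k_1,\ldots,k_m)$ becomes
\begin{equation}
\Big\{\prod_{j=1}^m{N_j\choose k_j}p_j(\theta)^{N_j-k_j}\Big\}\,\epsilon^{K}(1-\epsilon)^{n-K}\prod_{j=1}^m q_j^{k_j}.
\end{equation}
The sum is finite and every term is nonnegative and integrable, so linearity lets us integrate term by term without any convergence issue.

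Because the priors in \eqref{eqn: nuisance priors} are independent, each term's integral factorizes into a $q$-integral and an $\epsilon$-integral. The Dirichlet moment gives
\begin{equation}
\int_{\Delta_{m-1}}\prod_j q_j^{k_j}\,\pi(q)\,dq=\frac{\Gamma(\alpha_0)}{\Gamma(\alpha_0+K)}\prod_j\frac{\Gamma(\alpha_j+k_j)}{\Gamma(\alpha_j)}=\frac{\prod_j(\alpha_j)_{k_j}}{(\alpha_0)_K},
\end{equation}
which follows from the normalizing constant of the $\operatorname{Dirichlet}(\alpha_1+k_1,\ldots,\alpha_m+k_m)$ density. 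The Beta moment gives
\begin{equation}
\int_0^1\epsilon^K(1-\epsilon)^{n-K}\pi(\epsilon)\,d\epsilon=\frac{B(a+K,b+n-K)}{B(a,b)}.
\end{equation}
Substituting both into the summand yields exactly the expression in \eqref{eqn: collapsed likelihood expanded}.

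There is no genuine obstacle here. The only care needed is bookkeeping: the exponent of $(1-\epsilon)$ must aggregate to $n-K$, which uses $\sum_j N_j=n$, and the Dirichlet ratio must be written correctly in Pochhammer form. The identity holds exactly, not merely up to constants, for the likelihood \eqref{eqn: finite support likelihood} of the ordered sample.
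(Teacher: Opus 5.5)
Your proposal is correct and matches the paper's proof step for step. Both expand each atom's factor binomially, collect the powers into $\epsilon^K(1-\epsilon)^{n-K}$ using $\sum_j N_j=n$, evaluate the Dirichlet moment in Pochhammer form and the Beta integral, and sum over the finite index set. Your added remarks (finite-sum interchange, and the identity being exact for the ordered-sample likelihood) are accurate and slightly more explicit than the paper.
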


Writing $r_j=N_j-k_j$, each term attributes $r_j$ observations at atom~$z_j$ to the structural component and $k_j$ to contamination.
The Dirichlet integral averages over the unknown contaminating law, the Beta integral averages over~$\epsilon$, and the posterior for~$\theta$ is
\begin{equation}
\Pi(d\theta\mid N)
=
\frac{\bar L(\theta)\,\Pi(d\theta)}
{\int_\Theta \bar L(\vartheta)\,\Pi(d\vartheta)} .
\label{eqn: finite support posterior}
\end{equation}
Equation~\eqref{eqn: collapsed likelihood expanded} sums over latent allocations of the observed counts between the structural and contaminating parts. This allocation representation comes directly from the finite-support Huber model, resulting in a marginal likelihood for the structural parameter $\theta$.

The Dirichlet concentration parameters affect the marginal likelihood through the factors $(\alpha_j)_{k_j}/(\alpha_0)_K$ in Proposition~\ref{prop: collapsed likelihood}.
We use the symmetric choice $\alpha_j=1$ as the canonical default, treating all contaminating distributions on the support symmetrically;
other symmetric or asymmetric choices can be used when prior information about the contaminating mechanism is available.
Sensitivity to $\alpha$ is examined in the Appendix.

For the main analysis we use the symmetric nuisance prior
$\alpha_j=1$, $j=1,\ldots,m$, and $\epsilon\sim\operatorname{Beta}(1,1)$.
We denote the resulting default marginal likelihood by
$\bar L(\theta)=\bar L_1(\theta)$.
To assess sensitivity to the nuisance
prior on the contamination proportion, we also consider
$\epsilon\sim\operatorname{Beta}(1,b)$ with $b>0$, which gives
\begin{equation}
\bar L_b(\theta)
=
\sum_{r_1=0}^{N_1}\cdots\sum_{r_m=0}^{N_m}
\Big\{
\prod_{j=1}^m \frac{p_j(\theta)^{r_j}}{r_j!}
\Big\}
\frac{\Gamma(n-R+1)\,\Gamma(b+R)}{(m)_{n-R}},
\label{eqn: sensitivity likelihood}
\end{equation}
where $R=\sum_{j=1}^m r_j$.
Thus $\bar L(\theta)=\bar L_1(\theta)$.
The parameter $b$ is not part of the structural model;
it only controls
the nuisance prior on $\epsilon$.  Larger $b$ favors larger clean counts
$R$, and as $b\to\infty$ the terms with $R<n$ are suppressed, recovering
the ordinary finite-support likelihood up to a factor independent of
$\theta$.
\section{Large-sample interpretation}
\label{sec: theory}

After integrating out the contamination proportion and contaminating probability vector, the marginal likelihood \eqref{eqn: sensitivity likelihood} is no longer a product over observations.
Nevertheless, its finite allocation representation admits a direct large-sample interpretation.
We show that, for fixed finite $b$, the induced population criterion is determined by the largest fraction of the population law that can be attributed to $P_\theta$ under Huber domination.
Equivalently, the marginal likelihood favors structural distributions requiring the least contamination to explain the population probabilities.

Throughout this section, assume that the true population probabilities satisfy $p_{0j}>0$ for all $j=1,\ldots,m$ and that $N_j/n\to p_{0j}$.
Define the minimum contamination fraction
\begin{equation}
\epsilon_{\min}(\theta)
=
\inf\!\Big\{\epsilon\in[0,1]: p_{0j}\geq(1-\epsilon)\,p_j(\theta)\ \text{for all } j=1,\ldots,m\Big\},
\label{eqn: epsilon min definition}
\end{equation}
and write $\rho_0(\theta)=1-\epsilon_{\min}(\theta)$.
Equivalently, $\rho_0(\theta)$ is the largest clean fraction $\rho$ such that $p_{0j}\geq \rho p_j(\theta)$ for every atom.
The case $\rho_0(\theta)=1$ corresponds to exact structural fit, $P_0=P_\theta$.
For $\rho_0(\theta)<1$, define
\begin{equation}
A_{0,b}(\theta)
=
\int_0^{\rho_0(\theta)}
\rho^{b-1}(1-\rho)^{-(m-1)}\,d\rho .
\label{eqn: A zero b definition}
\end{equation}

\begin{theorem}\label{thm: likelihood asymptotics}
Suppose $b>0$, $m\geq 2$, $N_j/n\to p_{0j}>0$ and $p_j(\theta)>0$ for all~$j$.
For every $\theta$ with $\rho_0(\theta)<1$,
\begin{equation}
n^{m-b-1}\bar L_b(\theta)
\to
\Gamma(m)\,A_{0,b}(\theta).
\label{eqn: likelihood limit}
\end{equation}
The convergence is uniform over any subset of the parameter space on which the atom probabilities are bounded away from zero and $\rho_0(\theta)$ is bounded away from one.
\end{theorem}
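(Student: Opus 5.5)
The plan is to collapse the $m$-fold sum in \eqref{eqn: sensitivity likelihood} to a single sum over the total clean count $R$, recognize the inner sum as a multinomial probability, and then read the outer sum as a Riemann sum for $A_{0,b}(\theta)$.

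\textbf{Step 1 (reduction to one sum).} For fixed $R$, the multinomial theorem and $\sum_j p_j(\theta)=1$ give $\sum_{r_1+\cdots+r_m=R}\prod_j p_j(\theta)^{r_j}/r_j! = 1/R!$. Restricting to $r_j\le N_j$ multiplies this by
\[
\pi_R(\theta)=\Pr\{M_j\le N_j \text{ for all } j\},\qquad M\sim\operatorname{Multinomial}(R,p(\theta)).
\]
Hence
\[
\bar L_b(\theta)=\sum_{R=0}^{n}\pi_R(\theta)\,\frac{\Gamma(b+R)}{\Gamma(R+1)}\,\frac{\Gamma(n-R+1)\Gamma(m)}{\Gamma(m+n-R)},
\]
using $(m)_{n-R}=\Gamma(m+n-R)/\Gamma(m)$. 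Standard gamma-ratio asymptotics give $\Gamma(b+R)/\Gamma(R+1)\sim R^{b-1}$ and $\Gamma(n-R+1)/\Gamma(m+n-R)\sim (n-R)^{-(m-1)}$. Both ratios obey two-sided bounds of the form $C(1+R)^{b-1}$ and $C(1+n-R)^{-(m-1)}$, uniformly in $R$ and $n$.

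\textbf{Step 2 (behaviour of $\pi_R$).} Write $\rho=R/n$. Fix $\delta>0$.
\begin{itemize}
\item If $\rho\le\rho_0(\theta)-\delta$, then $E M_j=\rho n p_j(\theta)\le (\rho_0-\delta)np_j\le np_{0j}-\delta n p_j$. Since $N_j/n\to p_{0j}$, Hoeffding's inequality and a union bound over the $m$ atoms give $1-\pi_R\le m\,e^{-cn}$.
\item If $\rho\ge\rho_0(\theta)+\delta$, pick an atom $j^\ast$ attaining $\rho_0=\min_j p_{0j}/p_j(\theta)$. Then $E M_{j^\ast}\ge N_{j^\ast}+c'n$ eventually, so $\pi_R\le e^{-cn}$.
\end{itemize}
The constants $c,c'$ depend only on $\delta$ and on lower bounds for the $p_j(\theta)$. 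This is what yields uniformity over sets where the atom probabilities are bounded away from zero.

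\textbf{Step 3 (Riemann sum).} Multiply $\bar L_b$ by $n^{m-b-1}$ and split the $R$-sum into three ranges.
\begin{itemize}
\item \emph{Lower range, $R\le(\rho_0-\delta)n$.} Here $\pi_R=1-O(e^{-cn})$, and $n^{m-b-1}$ times the term equals $n^{-1}\,\rho^{b-1}(1-\rho)^{-(m-1)}\Gamma(m)(1+o(1))$. The sum is therefore a Riemann sum for $\Gamma(m)\int_0^{\rho_0-\delta}\rho^{b-1}(1-\rho)^{-(m-1)}\,d\rho$. Because $\rho_0-\delta<1$, the factor $(1-\rho)^{-(m-1)}$ is bounded on this range. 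The only singularity is $\rho^{b-1}$ at $0$ when $b<1$. It is integrable, and the terms with $R\le\eta n$ contribute $O(\eta^{b})$ by the uniform gamma-ratio bound, so dominated convergence applies.
\item \emph{Upper range, $R\ge(\rho_0+\delta)n$.} Each term is at most polynomial in $n$ times $e^{-cn}$, so the total is $o(1)$ after scaling. This includes the region near $R=n$, where $(1+n-R)^{-(m-1)}\le 1$.
\item \emph{Window, $|R/n-\rho_0|<\delta$.} Bounding $\pi_R\le1$, the contribution is $O(\delta)$, uniformly, since $\rho_0$ is bounded away from one.
\end{itemize}
Letting $n\to\infty$ and then $\delta\to0$ gives the limit $\Gamma(m)A_{0,b}(\theta)$.

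The main obstacle is Step 2 made uniform: controlling $\pi_R$ simultaneously over $R$ and $\theta$, while the data counts $N_j$ only satisfy $N_j/n\to p_{0j}$. I would first absorb $|N_j/n-p_{0j}|$ into $\delta$ for large $n$, and then apply Hoeffding bounds whose constants depend only on $\delta$ and $\min_j p_j(\theta)$.
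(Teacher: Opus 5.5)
Your proposal is correct and follows essentially the same route as the paper. Both collapse the sum to a single sum over $R$ weighted by a constrained multinomial probability, use Hoeffding bounds on either side of $\rho_0(\theta)$, bound the window of width $O(\delta)$ trivially, and pass to a Riemann-sum limit; the only cosmetic difference is that you absorb the near-zero region (the paper's $I_1$) into your lower range via the uniform gamma-ratio bound instead of treating it as a separate piece.
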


Theorem~\ref{thm: likelihood asymptotics} shows that away from exact structural fit, the rescaled marginal likelihood has a limiting criterion obtained by integrating over $0\leq \rho\leq \rho_0(\theta)$.
Because $A_{0,b}(\theta)$ is strictly increasing in $\rho_0(\theta)$, the asymptotic ordering favors structural laws requiring a smaller contamination fraction to explain $P_0$.

The case $\rho_0(\theta)=1$ is singular because the limiting integral diverges at its upper endpoint.
The following corollary records the resulting fixed-alternative comparison under exact specification.

\begin{corollary}\label{cor: exact fit dominance}
Suppose $b>0$, $m\geq 2$, $N_j/n\to p_{0j}>0$ for $j=1,\ldots,m$, and $p_0=p(\theta_0)$ for some $\theta_0\in\Theta$.
For every $\theta$ such that $p(\theta)\neq p(\theta_0)$ and $p_j(\theta)>0$ for all~$j$,
\begin{equation}
\frac{\bar L_b(\theta_0)}{\bar L_b(\theta)}
\to \infty .
\label{eqn: exact fit likelihood ratio}
\end{equation}
Moreover, for any set $K\subset\Theta$ on which $p_j(\theta)\geq c>0$ for all~$j$ and $\rho_0(\theta)\leq 1-\xi$ for some $c,\xi>0$,
\begin{equation}
\sup_{\theta\in K}
\frac{\bar L_b(\theta)}{\bar L_b(\theta_0)}
\to 0 .
\label{eqn: uniform exact fit likelihood ratio}
\end{equation}
\end{corollary}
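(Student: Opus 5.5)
The plan is to reduce the ratio statement to a growth comparison at the common scale $n^{m-b-1}$. Theorem~\ref{thm: likelihood asymptotics} controls the denominator. A direct lower bound on a truncated part of the allocation sum \eqref{eqn: sensitivity likelihood} will show that the numerator $\bar L_b(\theta_0)$ is of strictly larger order.

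First, for $\theta$ with $p(\theta)\neq p(\theta_0)=p_0$ we have $\rho_0(\theta)<1$. Indeed, $\rho_0(\theta)=1$ would force $p_{0j}\geq p_j(\theta)$ for all $j$, and since both vectors sum to one this gives equality. So Theorem~\ref{thm: likelihood asymptotics} applies and $n^{m-b-1}\bar L_b(\theta)\to\Gamma(m)A_{0,b}(\theta)<\infty$. It therefore suffices to show
\begin{equation}
n^{m-b-1}\bar L_b(\theta_0)\to\infty .
\end{equation}

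For this, fix $\delta\in(0,1)$. Since all terms of \eqref{eqn: sensitivity likelihood} are nonnegative, we drop every term with $R>(1-\delta)n$. Then we group the remaining terms by the value of $R$. For fixed $R$, the inner sum over $r$ with $\sum_j r_j=R$ and $r_j\le N_j$ equals $(R!)^{-1}\Pr\{M\le N \text{ componentwise}\}$, where $M\sim\mathrm{Multinomial}(R,p_0)$. When $R\le(1-\delta)n$ and $N_j/n\to p_{0j}>0$, we have $N_j-Rp_{0j}\ge \delta n p_{0j}/2$ for large $n$. Chebyshev's inequality then gives that this probability tends to one, uniformly in $R\le(1-\delta)n$. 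The $R$-th grouped term is therefore asymptotically
\begin{equation}
\Gamma(m)\,\frac{\Gamma(n-R+1)}{\Gamma(n-R+m)}\,\frac{\Gamma(b+R)}{\Gamma(R+1)} ,
\end{equation}
where we used $(m)_{n-R}=\Gamma(n-R+m)/\Gamma(m)$. By Stirling's formula this is $\Gamma(m)(n-R)^{-(m-1)}R^{b-1}\{1+o(1)\}$, uniformly for $\eta n\le R\le(1-\delta)n$. Summing over $R$ and recognising a Riemann sum with mesh $1/n$ yields
\begin{equation}
\liminf_{n\to\infty} n^{m-b-1}\bar L_b(\theta_0)\ \ge\ \Gamma(m)\int_\eta^{1-\delta}\rho^{b-1}(1-\rho)^{-(m-1)}\,d\rho .
\end{equation}
Since $m\ge2$, the integrand is not integrable at $\rho=1$. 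Letting $\delta,\eta\downarrow0$ therefore makes the right side arbitrarily large, which proves the divergence and hence \eqref{eqn: exact fit likelihood ratio}.

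For \eqref{eqn: uniform exact fit likelihood ratio}, the set $K$ satisfies exactly the hypotheses of the uniform part of Theorem~\ref{thm: likelihood asymptotics}. Hence $\sup_K n^{m-b-1}\bar L_b(\theta)\le \Gamma(m)\sup_K A_{0,b}(\theta)+o(1)$. On $K$ we have $\rho_0(\theta)\le1-\xi$, and $A_{0,b}$ is increasing in $\rho_0$, so
\begin{equation}
\sup_K A_{0,b}(\theta)\le\int_0^{1-\xi}\rho^{b-1}(1-\rho)^{-(m-1)}\,d\rho\le \xi^{-(m-1)}/b<\infty .
\end{equation}
Dividing this bounded supremum by the divergent rescaled numerator gives the uniform convergence to zero.

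The main obstacle is the uniform Stirling and Riemann-sum step on the truncated range. One must check that the factor $\Gamma(b+R)/\Gamma(R+1)$ is handled near $R=0$ when $b<1$. This is why the range is truncated below at $\eta n$, which costs nothing for a lower bound. One must also check that the multinomial probability bound holds uniformly in $R$. Both are routine once the truncation away from $R=n$ is in place. The truncation is precisely what removes the singular endpoint that blocks direct use of the theorem at $\theta_0$.
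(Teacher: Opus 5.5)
Your proposal is correct and follows essentially the same route as the paper. Both arguments reduce the ratio to showing $n^{m-b-1}\bar L_b(\theta_0)\to\infty$, using the sum-to-one argument for $\rho_0(\theta)<1$ together with Theorem~\ref{thm: likelihood asymptotics}; both lower-bound the allocation sum truncated to $\eta n\le R\le(1-\delta)n$ by a Riemann sum whose integrand is non-integrable at $\rho=1$; and both invoke the uniform part of the theorem on $K$. The only differences are cosmetic: you use Chebyshev where the paper uses Hoeffding, and you give the explicit bound $\xi^{-(m-1)}/b$ on $\sup_K A_{0,b}$ where the paper just writes $O(1)$.
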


The preceding results show that the large-sample marginal likelihood is governed by the minimum contamination fraction $\epsilon_{\min}(\theta)$, with exact structural fit treated as a boundary case. It remains to interpret the minimizers of this criterion when the population law is genuinely contaminated. The next result gives a stability bound for this target.

\begin{theorem}\label{thm: MLE property}
Suppose the population law satisfies $p_0=(1-\epsilon_0)p(\theta_0)+\epsilon_0 q$ for some $\epsilon_0\in[0,1]$ and probability vector~$q$.
Then $\epsilon_{\min}(\theta_0)\leq\epsilon_0$.  Moreover, if $\theta^*\in\arg\min_\theta\epsilon_{\min}(\theta)$, then
\begin{equation}
d_{\mathrm{TV}}\{p(\theta^*),p(\theta_0)\}
\leq 2\epsilon_0,
\label{eqn: tv stability}
\end{equation}
where $d_{\mathrm{TV}}(u,v)=\tfrac{1}{2}\sum_j|u_j-v_j|$.  For $P_\theta=\operatorname{Binomial}(M,\theta)$, this implies $|\theta^*-\theta_0|\leq 2\epsilon_0$.
\end{theorem}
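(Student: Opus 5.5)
The first claim is immediate from the definition \eqref{eqn: epsilon min definition}. Since $q_j\geq 0$, we have $p_{0j}=(1-\epsilon_0)p_j(\theta_0)+\epsilon_0 q_j\geq(1-\epsilon_0)p_j(\theta_0)$ for every $j$. So $\epsilon_0$ belongs to the feasible set whose infimum defines $\epsilon_{\min}(\theta_0)$, and therefore $\epsilon_{\min}(\theta_0)\leq\epsilon_0$.

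For the stability bound I will use the minimizer only through the inequality $\epsilon^*:=\epsilon_{\min}(\theta^*)\leq\epsilon_{\min}(\theta_0)\leq\epsilon_0$. The feasible set in \eqref{eqn: epsilon min definition} is closed, being an intersection of finitely many closed half-lines in $\epsilon$ intersected with $[0,1]$. Hence the infimum is attained, and $p_{0j}\geq(1-\epsilon^*)p_j(\theta^*)$ for all $j$. I will then bound each structural law's distance to $p_0$ and finish with the triangle inequality.

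For any probability vectors $u,v$ with $v_j\geq(1-\delta)u_j$ for all $j$,
\[
d_{\mathrm{TV}}(u,v)=\sum_j (u_j-v_j)_+\leq\sum_j \delta u_j=\delta,
\]
because $u_j-v_j\leq\delta u_j$ on every atom. Taking $(u,\delta)=(p(\theta^*),\epsilon^*)$ gives $d_{\mathrm{TV}}\{p(\theta^*),p_0\}\leq\epsilon^*\leq\epsilon_0$. Taking $(u,\delta)=(p(\theta_0),\epsilon_0)$ gives $d_{\mathrm{TV}}\{p(\theta_0),p_0\}\leq\epsilon_0$; directly, $p_0-p(\theta_0)=\epsilon_0\{q-p(\theta_0)\}$, whose total variation is at most $\epsilon_0$. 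The triangle inequality then yields \eqref{eqn: tv stability}.

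The binomial consequence is the only step needing a model-specific argument, and I expect it to be the main obstacle. I want $|\theta-\theta'|\leq d_{\mathrm{TV}}\{\operatorname{Binomial}(M,\theta),\operatorname{Binomial}(M,\theta')\}$. I will use the coupling characterization of total variation. Write each $X\sim\operatorname{Binomial}(M,\theta)$ as a sum of $M$ Bernoullis and consider the event that the first coordinate differs. A cleaner route is to test a single set: take $B=\{X\geq 1\}$, so that $|P_\theta(B)-P_{\theta'}(B)|=|(1-\theta')^M-(1-\theta)^M|\geq|\theta-\theta'|$. This holds when $M=1$ and for small parameters, but not in general. If that fails, I will instead use the test function $f(x)=x/M\in[0,1]$, which gives
\[
|\theta-\theta'|=|E_\theta f-E_{\theta'}f|\leq d_{\mathrm{TV}}.
\]
This works because, for $f$ taking values in $[0,1]$, the difference of expectations is bounded by the total variation distance. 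This test-function route is the one I will commit to, and combining it with \eqref{eqn: tv stability} gives $|\theta^*-\theta_0|\leq 2\epsilon_0$.
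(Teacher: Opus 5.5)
Your proof is correct and follows the paper's argument: feasibility of $\epsilon_0$, the bound $d_{\mathrm{TV}}\{p(\theta),p_0\}\leq\epsilon$ whenever $p_{0j}\geq(1-\epsilon)p_j(\theta)$ for all $j$, the triangle inequality, and the test function $x/M$ for the binomial case (you were right to abandon the $\{X\geq 1\}$ route). The differences are cosmetic: you get the intermediate bound from $d_{\mathrm{TV}}(u,v)=\sum_j(u_j-v_j)_+$ instead of constructing the residual law $q^\theta$, and you explicitly justify that the infimum defining $\epsilon_{\min}$ is attained, which the paper leaves implicit.
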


Thus the minimum-contamination target is stable under Huber contamination. Although the clean structural parameter need not be identifiable from $p_0$ alone, every minimizer of $\epsilon_{\min}$ is close in total variation to the true law whenever $\epsilon_0$ is small. The bound is obtained in terms of total variation, and in the binomial model it gives the parameter bound directly as $|\theta^*-\theta_0|\leq 2\epsilon_0$. The proof uses the elementary fact that $\epsilon_{\min}(\theta)\leq\epsilon$ implies $d_{\mathrm{TV}}\{p(\theta),p_0\}\leq\epsilon$, followed by the triangle inequality.

\section{Posterior computation}
\label{sec: computation}
 
Direct summation in \eqref{eqn: sensitivity likelihood} requires
$\prod_{j=1}^m (N_j+1)$ terms.
The latent allocation representation gives an exact dynamic program once the atom probabilities $p_j(\theta)$ are available.
For each support point, define
\begin{equation}
g_j(u;\theta)
=
\sum_{r=0}^{N_j}\frac{p_j(\theta)^r}{r!}\,u^r .
\label{eqn: canonical cell polynomial}
\end{equation}
We write $C_j(R;\theta)$ for the coefficient of $u^R$ in
$\prod_{\ell=1}^j g_\ell(u;\theta)$.
The recursion starts from
\begin{equation}
C_0(0;\theta)=1,
\qquad
C_0(R;\theta)=0 \quad (R\geq 1),
\label{eqn: dp initialization}
\end{equation}
and proceeds by
\begin{equation}
C_j(R;\theta)
=
\sum_{r=0}^{\min\{R,N_j\}}
C_{j-1}(R-r;\theta)
\frac{p_j(\theta)^r}{r!},
\label{eqn: dp recursion}
\end{equation}
for $j=1,\ldots,m$. Then
\begin{equation}
\bar L_b(\theta)
=
\sum_{R=0}^n V_R\, C_m(R;\theta),
\qquad
V_R=
\frac{\Gamma(n-R+1)\,\Gamma(b+R)}{(m)_{n-R}} .
\label{eqn: dp likelihood}
\end{equation}
A count-aware implementation only processes attainable partial totals and has arithmetic cost $O(n^2)$ for fixed observed support, since $\sum_j N_j=n$.
The same recursion gives the score of the marginal likelihood. Suppose $\theta=(\theta_1,\ldots,\theta_d)$ and the atom probabilities are differentiable. Define
$D_{j\ell}(R;\theta)=\partial C_j(R;\theta)/\partial\theta_\ell$.
Differentiating \eqref{eqn: dp recursion} gives a forward recursion for $D_{j\ell}$ of the same order as the likelihood recursion.
The resulting marginal-likelihood score is
\begin{equation}
\frac{\partial}{\partial \theta_\ell}
\log \bar L_b(\theta)
=
\frac{
\sum_{R=0}^n V_R D_{m\ell}(R;\theta)
}{
\sum_{R=0}^n V_R C_m(R;\theta)
}.
\label{eqn: collapsed score}
\end{equation}
Thus gradient-based posterior computation, such as Metropolis-adjusted Langevin or Hamiltonian Monte Carlo, does not require finite-difference approximation or grid enumeration.
The Appendix gives the full derivative recursion and numerical normalization details.

\section{Example: contaminated bounded counts}
\label{sec: example}

We illustrate the marginal likelihood in a bounded-count problem.
Suppose each unit produces a count $Y_i\in\{0,\ldots,M\}$, representing, for example, the number of positive actions, correct responses, or recorded events in $M$ opportunities.
The structural model is $P_\theta=\operatorname{Binomial}(M,\theta)$, so that
\begin{equation}
p_j(\theta)={M\choose j}\theta^j(1-\theta)^{M-j},
\quad j=0,\ldots,M.
\label{eqn: binomial structural model}
\end{equation}
The parameter $\theta$ is the genuine success probability, while the arbitrary distribution $Q$ represents invalid counts from automated entries, careless respondents, faulty devices, duplicated users, or another unspecified source.
The marginal likelihood used in the computation is \eqref{eqn: sensitivity likelihood} with the atom probabilities in \eqref{eqn: binomial structural model}.
Thus the dynamic program in Section~\ref{sec: computation} applies directly with support size $m=M+1$.
No binomial-specific augmentation is needed: the latent clean-count allocations have already been summed out in the marginal likelihood, and the posterior is sampled using the marginal-likelihood score.
For the numerical illustration, we generated $n=300$ observations with $M=20$, true structural parameter $\theta_0=0.30$, and contamination proportion $\epsilon_0=0.20$.
Clean observations were sampled from $\operatorname{Binomial}(20,0.30)$, while contaminated observations were sampled from $\operatorname{Binomial}(20,0.75)$.
The analysis does not use this contaminating distribution. We used the prior $\theta\sim\operatorname{Beta}(1,1)$ and the nuisance priors $q\sim\operatorname{Dirichlet}(1,\ldots,1)$ and $\epsilon\sim\operatorname{Beta}(1,b)$, with $b\in\{1,4,9,19,99\}$.
The posterior was sampled using a Metropolis-adjusted Langevin algorithm on the logit scale for $\theta$, using the analytic dynamic-programming score in \eqref{eqn: collapsed score}.
Figure~\ref{fig: binomial example} compares the naive binomial posterior, which ignores contamination, with the Huber posteriors.
The naive analysis is pulled toward the high-count contaminated observations.
In contrast, the Huber posteriors remain concentrated near the true clean parameter value $\theta_0=0.30$. Across $b\in\{1,4,9,19,99\}$, changing the prior on the contamination proportion has little effect on posterior inference for $\theta$.
\begin{figure}
\centering
\includegraphics[width=0.92\textwidth]{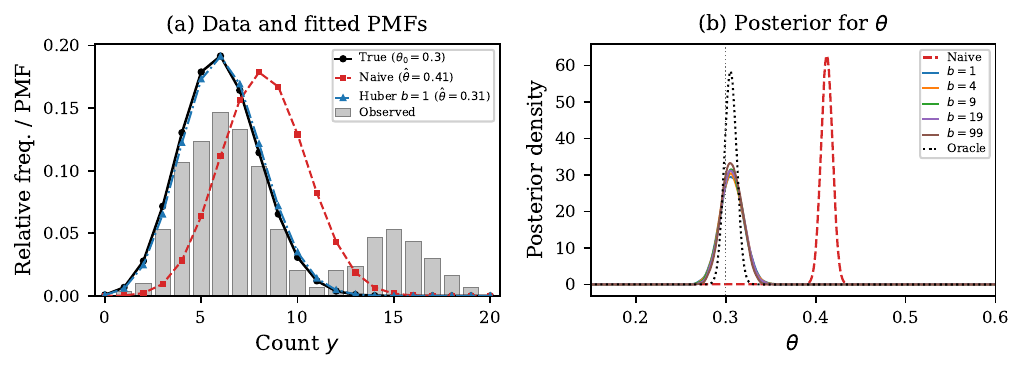}
\caption{Contaminated binomial example. Left: observed count histogram with fitted structural probability masses.
Right: posterior densities for $\theta$ under the naive binomial analysis, the Huber marginal likelihood with several values of $b$, and the oracle clean analysis.}
\label{fig: binomial example}
\end{figure}

Numerically, the naive binomial posterior mean was $0.412$, while the Huber posterior means were $0.307$ for all reported values $b\in\{1,4,9,19,99\}$, close to the oracle clean posterior mean $0.305$.
The corresponding $95\%$ credible intervals changed only slightly across $b$;
full numerical summaries and Markov chain Monte Carlo diagnostics are given in the Appendix.
A broader sensitivity study over contamination overlap $\theta_{\mathrm{c}}\in\{0.45,0.60,0.75\}$, contamination proportion $\epsilon_0\in\{0.10,0.20,0.30\}$, and symmetric atom-level Dirichlet parameter $\alpha_j=\alpha$ with $\alpha\in\{0.5,1,2\}$, each averaged over $50$ replications, is also reported there.
The Huber posterior remains stable across most of these settings. The main exception is the weakly identified regime in which the contaminating distribution is close to the structural distribution and the contamination proportion is large. Specifically, under $\theta_{\mathrm{c}}=0.45$ and $\epsilon_0=0.30$, coverage ranges from $0.62$ to $0.82$ over $b\in\{1,4,9,19,99\}$, reflecting weak identification rather than numerical instability. 
In applied use, we therefore recommend reporting a modest sensitivity grid over $b$ rather than relying on a single value when substantial contamination is plausible.

\section{Discussion}
\label{sec: discussion}

This paper shows that finite support turns Huber's unrestricted contaminating law into a probability vector that can be integrated out, together with the contamination proportion, yielding an exact marginal likelihood for the structural parameter.
The resulting allocation representation gives a quadratic-cost dynamic program and an analytic score.

Although this marginal likelihood is not a product over observations, its large-sample behavior admits a direct interpretation through Huber domination.
For fixed finite $b$, the induced population criterion is the minimum contamination fraction needed to explain the population probabilities, with exact structural fit appearing as a boundary case.
Under genuine Huber contamination, minimizers of this criterion remain within $2\epsilon_0$ in total variation of the true structural law.

The construction relies on finite support.
For continuous sample spaces, applying the same idea after partitioning the support would produce a partition-dependent likelihood, making the partition part of the statistical problem.

\appendix

\section{Proofs}
\label{sec:supp-proofs}

\begin{proof}[Proof of Proposition \ref{prop: collapsed likelihood}]
Starting from the likelihood expression in the main text \eqref{eqn: finite support likelihood}, we apply the binomial theorem to each term corresponding to support point $j$:
\begin{equation}
\{(1-\epsilon)p_j(\theta)+\epsilon q_j\}^{N_j}
=
\sum_{k_j=0}^{N_j}
{N_j\choose k_j}
(1-\epsilon)^{N_j-k_j}
\epsilon^{k_j}
p_j(\theta)^{N_j-k_j}
q_j^{k_j}.
\label{eqn: appendix binomial expansion}
\end{equation}
Multiplying this expansion over all atoms $j=1,\ldots,m$, and collecting powers of $\epsilon$ and $1-\epsilon$ using $K=\sum_{j=1}^m k_j$ and $\sum_{j=1}^m (N_j - k_j) = n - K$, a generic term in the expanded product can be written as
\begin{equation}
\Big\{
\prod_{j=1}^m
{N_j\choose k_j}
p_j(\theta)^{N_j-k_j}
q_j^{k_j}
\Big\}
(1-\epsilon)^{n-K}\epsilon^K .
\end{equation}
We integrate the components involving $q$ under the $\operatorname{Dirichlet}(\alpha_1, \ldots, \alpha_m)$ prior over the simplex $\Delta_{m-1}$:
\begin{equation}
\int_{\Delta_{m-1}}
\prod_{j=1}^m q_j^{k_j}
\frac{1}{B(\alpha)}
\prod_{j=1}^m q_j^{\alpha_j-1}\,dq
=
\frac{\Gamma(\alpha_0)}{\prod_{j=1}^m \Gamma(\alpha_j)} \frac{\prod_{j=1}^m \Gamma(\alpha_j + k_j)}{\Gamma(\alpha_0 + K)}
=
\frac{\prod_{j=1}^m(\alpha_j)_{k_j}}{(\alpha_0)_K},
\label{eqn: appendix dirichlet integral}
\end{equation}
where $\alpha_0=\sum_{j=1}^m\alpha_j$. Next, integrating the remaining factors involving $\epsilon$ against the $\operatorname{Beta}(a,b)$ prior over the interval $[0,1]$ yields the beta integral:
\begin{equation}
\int_0^1
\epsilon^K(1-\epsilon)^{n-K}
\frac{\epsilon^{a-1}(1-\epsilon)^{b-1}}{B(a,b)}\,d\epsilon
=
\frac{B(a+K,b+n-K)}{B(a,b)}.
\label{eqn: appendix beta integral}
\end{equation}
Combining \eqref{eqn: appendix binomial expansion}--\eqref{eqn: appendix beta integral} and summing over all valid indices $k_1, \ldots, k_m$ establishes \eqref{eqn: collapsed likelihood expanded}. The boundary case $\epsilon=0$ follows from the same expression, since only terms with $K=0$ survive before integration, and the beta integral correctly reduces to $B(a,b+n)/B(a,b)$.
\end{proof}

\begin{proof}[Proof of Theorem \ref{thm: likelihood asymptotics}]
    For a fixed total clean count $R$, define the coefficient
\begin{equation}
C_{n,R}(\theta)
=
\sum_{\substack{r_1+\cdots+r_m=R\\0\leq r_j\leq N_j}}
\prod_{j=1}^m
\frac{p_j(\theta)^{r_j}}{r_j!}.
\label{eqn: proof CnR}
\end{equation}
Then the marginal likelihood $\bar L_b(\theta)$ from \eqref{eqn: sensitivity likelihood} can be rearranged by grouping terms according to the total sum $R = \sum_{j=1}^m r_j$:
\begin{equation}
\bar L_b(\theta)
=
\sum_{R=0}^n
\frac{\Gamma(n-R+1)\Gamma(b+R)}{(m)_{n-R}}
C_{n,R}(\theta).
\label{eqn: proof grouped likelihood}
\end{equation}
Now introduce a random vector $Z^{(R)}\sim \operatorname{Multinomial}\{R,p(\theta)\}$. For any non-negative integer vector $r=(r_1,\ldots,r_m)$ summing to $R$, its multinomial probability mass function is given by
\begin{equation}
\Pr_\theta\{Z^{(R)}=r\}
=
R!\prod_{j=1}^m\frac{p_j(\theta)^{r_j}}{r_j!}.
\end{equation}
Therefore, we can express $C_{n,R}(\theta)$ as a constrained multinomial probability:
\begin{equation}
C_{n,R}(\theta)
=
\frac{1}{R!}
\Pr_\theta\{Z_j^{(R)}\leq N_j,\ j=1,\ldots,m\}.
\label{eqn: proof C probability}
\end{equation}
Substituting \eqref{eqn: proof C probability} into \eqref{eqn: proof grouped likelihood} gives
\begin{equation}
\bar L_b(\theta)
=
\sum_{R=0}^n
W_{n,R}
\Pr_\theta\{Z_j^{(R)}\leq N_j,\ j=1,\ldots,m\},
\label{eqn: proof W likelihood}
\end{equation}
where the weights are defined as
\begin{equation}
W_{n,R}
=
\frac{\Gamma(n-R+1)\Gamma(b+R)}
{(m)_{n-R}R!}
=
\Gamma(m)\frac{\Gamma(n-R+1)\Gamma(b+R)}{\Gamma(n-R+m)\Gamma(R+1)}.
\label{eqn: proof W definition}
\end{equation}

To establish the limit of $n^{m-b-1}\bar L_b(\theta)$, we use Stirling's asymptotic expansion for the gamma function ratio, which states that for any fixed constants $s, t$, $\Gamma(x+s)/\Gamma(x+t) = x^{s-t}(1 + O(x^{-1}))$ as $x \to \infty$. Applying this to the weights with $\rho = R/n$, we have:
\begin{equation}
\frac{\Gamma(n-R+1)}{\Gamma(n-R+m)} = (n-R)^{1-m}\left(1 + O\left(\frac{1}{n-R}\right)\right), \quad \frac{\Gamma(b+R)}{\Gamma(R+1)} = R^{b-1}\left(1 + O\left(\frac{1}{R}\right)\right).
\end{equation}
Consequently, we can write the rescaled weight $n^{m-b} W_{n,R}$ as:
\begin{equation}
n^{m-b} W_{n,R} = \Gamma(m) \rho^{b-1}(1-\rho)^{-(m-1)} (1 + E_{n,R}),
\label{eqn: proof W stirling}
\end{equation}
where the error term satisfies $|E_{n,R}| \leq \frac{C}{n\rho(1-\rho)}$ for a constant $C > 0$ depending only on $m$ and $b$. Thus, the convergence is uniform for $\rho$ in any compact subinterval of $(0,1)$.

We fix $\theta$ such that $\rho_0(\theta) < 1$. Then we set $\eta > 0$ satisfying $\eta < \min(\rho_0(\theta), 1-\rho_0(\theta))$ and $\delta > 0$ satisfying $\delta < \eta$, and decompose the sum in \eqref{eqn: proof W likelihood} into four distinct regions based on the value of $\rho = R/n$:
\begin{align*}
I_1 &= \{R : 0 \leq R < \delta n\}, \\
I_2 &= \{R : \delta n \leq R \leq (\rho_0(\theta) - \eta)n\}, \\
I_3 &= \{R : (\rho_0(\theta) - \eta)n < R < (\rho_0(\theta) + \eta)n\}, \\
I_4 &= \{R : (\rho_0(\theta) + \eta)n \leq R \leq n\}.
\end{align*}

First, consider $R \in I_2$. For these values, $\rho$ lies in a compact subset of $(0,1)$. By the definition of $\rho_0(\theta) = 1 - \epsilon_{\min}(\theta)$, we have $p_{0j} \geq \rho_0(\theta) p_j(\theta) \geq (\rho + \eta) p_j(\theta)$ for all $j=1,\ldots,m$. Since $N_j/n \to p_{0j}$, for sufficiently large $n$, we have $N_j/n \geq (\rho + \eta/2)p_j(\theta)$, which implies $N_j \geq R p_j(\theta) + \frac{\eta}{2} n p_j(\theta)$. By Hoeffding's inequality applied to each multinomial marginal $Z_j^{(R)} \sim \operatorname{Binomial}(R, p_j(\theta))$, we obtain
\begin{equation}
\Pr_\theta\{Z_j^{(R)} > N_j\} \leq \exp\left( -2 R \left( \frac{\eta n p_j(\theta)}{2 R} \right)^2 \right) = \exp\left( - \frac{\eta^2 n^2 p_j(\theta)^2}{2 R} \right) \leq \exp\left( - \frac{\eta^2 c^2}{2} n \right),
\end{equation}
where $c = \min_j p_j(\theta) > 0$. By a union bound, $1 - \Pr_\theta\{Z_j^{(R)}\leq N_j,\ \forall j\} \leq m \exp(-\gamma n)$ for some $\gamma > 0$. This exponential convergence to 1 is uniform for $R \in I_2$. Therefore, the sum over $I_2$ satisfies:
\begin{align*}
n^{m-b-1} \sum_{R \in I_2} W_{n,R} \Pr_\theta\{Z_j^{(R)}\leq N_j,\ \forall j\} &= \frac{1}{n} \sum_{R \in I_2} \Gamma(m) \rho^{b-1}(1-\rho)^{-(m-1)} (1 + O(n^{-1})) (1 - O(e^{-\gamma n})) \\
&\to \Gamma(m) \int_\delta^{\rho_0(\theta)-\eta} \rho^{b-1}(1-\rho)^{-(m-1)}\,d\rho \quad \text{as } n \to \infty,
\end{align*}
recognized as a Riemann sum converging to the specified integral.

Second, consider $R \in I_4$. By definition of $\rho_0(\theta)$, since $\rho \geq \rho_0(\theta) + \eta$, there exists at least one coordinate $j$ such that $\rho p_j(\theta) - p_{0j} \geq \eta p_j(\theta)$. Given $N_j/n \to p_{0j}$, for large $n$ we have $N_j/n \leq \rho p_j(\theta) - \frac{\eta}{2} p_j(\theta)$, so $N_j \leq R p_j(\theta) - \frac{\eta}{2} n p_j(\theta)$. Another application of Hoeffding's inequality yields
\begin{equation}
\Pr_\theta\{Z_j^{(R)} \leq N_j, \ \forall j\} \leq \Pr_\theta\{Z_j^{(R)} \leq N_j\} \leq \exp(-\gamma n).
\end{equation}
Since $n^{m-b} W_{n,R}$ is bounded by a polynomial in $n$ over $I_4$, the entire sum over $I_4$ vanishes exponentially fast:
\begin{equation}
n^{m-b-1} \sum_{R \in I_4} W_{n,R} \Pr_\theta\{Z_j^{(R)}\leq N_j,\ \forall j\} \leq \exp(-\gamma n) \frac{1}{n} \sum_{R \in I_4} n^{m-b} W_{n,R} \to 0.
\end{equation}

Third, for the boundary band $I_3$, we bound the multinomial probability by 1. Since $n^{m-b} W_{n,R}$ is uniformly bounded by a constant $M_{\eta}$ on this region, the sum satisfies:
\begin{equation}
n^{m-b-1} \sum_{R \in I_3} W_{n,R} \Pr_\theta\{Z_j^{(R)}\leq N_j,\ \forall j\} \leq \frac{1}{n} \sum_{R \in I_3} M_{\eta} \leq 2\eta M_{\eta},
\end{equation}
which can be made arbitrarily small as $\eta \downarrow 0$.

Finally, for the lower endpoint region $I_1$, bound the multinomial probability by one. The term $R=0$ contributes $O(n^{-b})$ after multiplication by $n^{m-b-1}$. For $1\leq R<\delta n$, Stirling approximation \eqref{eqn: proof W stirling} yields
\begin{equation}
n^{m-b} W_{n,R}\leq C'\left(\frac{R}{n}\right)^{b-1}.
\end{equation}
Therefore,
\begin{equation}
n^{m-b-1} \sum_{R \in I_1} W_{n,R} \Pr_\theta\{Z_j^{(R)}\leq N_j,\ \forall j\}
\leq
O(n^{-b})+\frac{1}{n}\sum_{1\leq R<\delta n} C'\left(\frac{R}{n}\right)^{b-1},
\end{equation}
which is bounded by a constant multiple of $\int_0^\delta \rho^{b-1}\,d\rho$ and hence vanishes as $\delta \downarrow 0$ since $b > 0$. Combining the limits of the four regions by first taking $n \to \infty$, then $\delta \downarrow 0$ and $\eta \downarrow 0$, we conclude that
\begin{equation}
n^{m-b-1}\bar L_b(\theta)
\to
\Gamma(m)
\int_0^{\rho_0(\theta)}
\rho^{b-1}(1-\rho)^{-(m-1)}\,d\rho
=
\Gamma(m)A_{0,b}(\theta).
\end{equation}

For the uniform statement, consider a compact subset of $\Theta$ where $p_j(\theta)\geq c>0$ for all $j$ and $\rho_0(\theta)\leq 1-\xi$ for some $\xi>0$. The Stirling approximation error bound $|E_{n,R}| \leq \frac{C}{n\rho(1-\rho)}$ is uniform since $\rho$ is bounded away from 1 by $1 - \xi/2$. The multinomial tail bounds are also uniform because the separation from the boundary is at least $c\eta + o(1)$, which is bounded away from zero. This establishes the asserted uniform convergence.
\end{proof}

\begin{proof}[Proof of Corollary \ref{cor: exact fit dominance}]
    Since $p_0=p(\theta_0)$, the minimum contamination fraction satisfies $\epsilon_{\min}(\theta_0)=0$, which implies $\rho_0(\theta_0)=1$. If $p(\theta)\neq p(\theta_0)$ and $p_j(\theta)>0$ for all $j$, then at least one coordinate must satisfy $p_j(\theta)>p_{0j}$, because both $p(\theta)$ and $p(\theta_0)$ are probability vectors summing to one. This implies $\epsilon_{\min}(\theta)>0$ and consequently $\rho_0(\theta)<1$. By Theorem~\ref{thm: likelihood asymptotics}, the rescaled marginal likelihood converges to a finite positive constant:
\begin{equation}
n^{m-b-1}\bar L_b(\theta)
\to
\Gamma(m)A_{0,b}(\theta) \in (0, \infty).
\end{equation}
To prove the pointwise likelihood-ratio assertion, it suffices to show that the same rescaling diverges to infinity at $\theta_0$. Fix any $\delta, \eta$ such that $0 < \delta < 1-\eta < 1$. Retaining only the terms from $R = \delta n$ to $R = (1-\eta)n$ in \eqref{eqn: proof W likelihood}, we have the lower bound:
\begin{equation}
n^{m-b-1}\bar L_b(\theta_0)
\geq
n^{m-b-1}
\sum_{\delta n\leq R\leq (1-\eta)n}
W_{n,R}
\Pr_{\theta_0}\{Z_j^{(R)}\leq N_j,\ j=1,\ldots,m\}.
\label{eqn: proof exact fit lower bound}
\end{equation}
For any $\rho = R/n \leq 1-\eta$, the multinomial expectation satisfies $E_{\theta_0}\{Z_j^{(R)}\} = R p_{0j} \leq (1-\eta)n p_{0j}$, while the observed counts satisfy $N_j/n \to p_{0j} > 0$. Thus, $N_j - E_{\theta_0}\{Z_j^{(R)}\} \geq \eta n p_{0j} + o(n) \geq \frac{\eta}{2} n p_{0j} > 0$ for large $n$. By Hoeffding's inequality, the multinomial probability in \eqref{eqn: proof exact fit lower bound} converges to one uniformly for $\rho \in [\delta, 1-\eta]$. Utilizing the uniform Stirling approximation \eqref{eqn: proof W stirling} on this interval, the sum behaves as a Riemann sum, yielding:
\begin{equation}
\liminf_{n\to\infty}
n^{m-b-1}\bar L_b(\theta_0)
\geq
\Gamma(m)
\int_\delta^{1-\eta}
\rho^{b-1}(1-\rho)^{-(m-1)}\,d\rho .
\end{equation}
Since $m\geq2$, the exponent of the $(1-\rho)$ term in the denominator is $m-1 \geq 1$, which makes the integrand non-integrable at the upper limit $\rho=1$. Taking the limit as $\eta \downarrow 0$, the integral diverges to infinity, which proves that $n^{m-b-1}\bar L_b(\theta_0)\to\infty$. Consequently, the ratio satisfies:
\begin{equation}
\frac{\bar L_b(\theta_0)}{\bar L_b(\theta)}
=
\frac{n^{m-b-1}\bar L_b(\theta_0)}
{n^{m-b-1}\bar L_b(\theta)}
\to\infty ,
\end{equation}
proving the pointwise statement. For the uniform assertion over a set $K \subset \Theta$ where $\rho_0(\theta) \leq 1-\xi$, Theorem~\ref{thm: likelihood asymptotics} guarantees that $\sup_{\theta\in K} n^{m-b-1}\bar L_b(\theta) = O(1)$. Combining this uniform upper bound with the divergence of the numerator at $\theta_0$ yields $\sup_{\theta\in K} \frac{\bar L_b(\theta)}{\bar L_b(\theta_0)} \to 0$.
\end{proof}

\begin{proof}[Proof of Theorem \ref{thm: MLE property}]
    The assumed Huber population representation states that $p_{0j} = (1-\epsilon_0)p_j(\theta_0) + \epsilon_0 q_j$ for all $j=1,\ldots,m$. Since $q_j \geq 0$, it follows that $p_{0j} \geq (1-\epsilon_0)p_j(\theta_0)$ for all $j$. This implies that $\epsilon = \epsilon_0$ is a feasible element in the set defined in \eqref{eqn: epsilon min definition}, and by taking the infimum, we have $\epsilon_{\min}(\theta_0)\leq\epsilon_0$.

Next, suppose that $\theta \in \Theta$ satisfies $\epsilon_{\min}(\theta)\leq\epsilon$ for some $\epsilon \in [0,1]$. By definition, this implies that $p_{0j}\geq (1-\epsilon)p_j(\theta)$ for all $j=1,\ldots,m$. If $\epsilon>0$, we can define the vector $q^\theta$ elementwise by
\begin{equation}
q_j^\theta
=
\frac{p_{0j}-(1-\epsilon)p_j(\theta)}{\epsilon},
\qquad j=1,\ldots,m .
\end{equation}
By construction, $q_j^\theta \geq 0$ for all $j$, and their sum satisfies $\sum_{j=1}^m q_j^\theta = \frac{1}{\epsilon} \left[ \sum_{j=1}^m p_{0j} - (1-\epsilon)\sum_{j=1}^m p_j(\theta) \right] = \frac{1 - (1-\epsilon)}{\epsilon} = 1$. Thus, $q^\theta$ is a valid probability vector on the support atoms, and we can write $p_0=(1-\epsilon)p(\theta)+\epsilon q^\theta$. If $\epsilon=0$, the inequality implies $p_0=p(\theta)$, which can be interpreted within the same form. The total variation distance between $p(\theta)$ and $p_0$ is then bounded by:
\begin{equation}
d_{\mathrm{TV}}\{p(\theta),p_0\} = \frac{1}{2}\sum_{j=1}^m |p_j(\theta) - p_{0j}| = \frac{1}{2}\sum_{j=1}^m |p_j(\theta) - (1-\epsilon)p_j(\theta) - \epsilon q_j^\theta| = \epsilon \cdot \frac{1}{2}\sum_{j=1}^m |p_j(\theta) - q_j^\theta| \leq \epsilon,
\label{eqn: proof tv intermediate}
\end{equation}
where the final inequality follows because the total variation distance between any two probability vectors is at most 1.

Now, let $\theta^*\in\arg\min_\theta\epsilon_{\min}(\theta)$. Since $\theta^*$ minimizes $\epsilon_{\min}(\theta)$, we have $\epsilon_{\min}(\theta^*) \leq \epsilon_{\min}(\theta_0) \leq \epsilon_0$. Applying \eqref{eqn: proof tv intermediate} with $\theta = \theta^*$ and $\epsilon = \epsilon_0$ yields $d_{\mathrm{TV}}\{p(\theta^*),p_0\}\leq\epsilon_0$. Furthermore, the true population representation directly implies $d_{\mathrm{TV}}\{p(\theta_0),p_0\}\leq\epsilon_0$. By the triangle inequality for total variation distance, we obtain:
\begin{equation}
d_{\mathrm{TV}}\{p(\theta^*),p(\theta_0)\}
\leq
d_{\mathrm{TV}}\{p(\theta^*),p_0\} + d_{\mathrm{TV}}\{p(\theta_0),p_0\}
\leq
2\epsilon_0,
\end{equation}
proving the general total-variation stability bound.

For the specific case of $P_\theta=\operatorname{Binomial}(M,\theta)$,
\begin{equation}
|\theta^*-\theta_0|
=
\Big|
\sum_{j=0}^M \frac{j}{M}\{p_j(\theta^*)-p_j(\theta_0)\}
\Big|
\leq
d_{\mathrm{TV}}\{p(\theta^*),p(\theta_0)\},
\end{equation}
because $j/M\in[0,1]$ and total variation is the supremum of expectation differences over functions bounded between zero and one. Hence $|\theta^*-\theta_0|\leq 2\epsilon_0$, completing the proof.
\end{proof}

\begin{remark}
    The proof of Theorem~2 establishes the following intermediate bound used in the triangle-inequality argument.
If $\epsilon_{\min}(\theta)\leq\epsilon$, then $p_0=(1-\epsilon)p(\theta)+\epsilon q^\theta$ for some probability vector $q^\theta$, and consequently $d_{\mathrm{TV}}\{p(\theta),p_0\}\leq\epsilon$.
\end{remark}

\section{Computational details}
\label{sec:supp-computation}

This section gives the score recursion and numerical normalization details for the dynamic program in Section~\ref{sec: computation} of the main text.
All numerical results reported in the paper use the marginal likelihood and its analytic score within a Metropolis-adjusted Langevin algorithm.
\subsection{Score recursion}

Define
\begin{equation}
D_{j\ell}(R;\theta)
=
\frac{\partial}{\partial \theta_\ell} C_j(R;\theta),
\qquad
\dot p_{j\ell}(\theta)
=
\frac{\partial}{\partial \theta_\ell}p_j(\theta),
\label{eqn: supp score coefficient definition}
\end{equation}
for $\ell=1,\ldots,d$.
The derivative recursion starts from
\begin{equation}
D_{0\ell}(R;\theta)=0,
\qquad R=0,\ldots,n,
\label{eqn: supp score dp initialization}
\end{equation}
and differentiating the coefficient recursion gives
\begin{equation}
\begin{split}
D_{j\ell}(R;\theta)
=
\sum_{r=0}^{\min\{R,N_j\}}
& D_{j-1,\ell}(R-r;\theta)
\frac{p_j(\theta)^r}{r!}  \\
+
\sum_{r=1}^{\min\{R,N_j\}}
& C_{j-1}(R-r;\theta)
\frac{\dot p_{j\ell}(\theta)p_j(\theta)^{r-1}}{(r-1)!}.
\end{split}
\label{eqn: supp score dp recursion}
\end{equation}
Thus the marginal likelihood and its score are computed by the same forward
dynamic program.
For fixed-dimensional $\theta$, the cost has the same order as the likelihood evaluation after the atom probabilities and their derivatives are available;
more generally, evaluating all score components costs $d$ times the cost of the scalar derivative recursion for $d=\dim(\theta)$.
The posterior score is
\begin{equation}
\nabla_\theta \log \Pi(\theta\mid N)
=
\nabla_\theta \log \pi(\theta)
+
\nabla_\theta \log \bar L_b(\theta).
\label{eqn: supp collapsed posterior score}
\end{equation}

\subsection{Numerical normalization}

In implementation, the coefficient vector should be
normalized after each convolution step, with scale factors accumulated
separately, to avoid underflow and overflow.
The derivative arrays
$D_{j\ell}(R;\theta)$ should be normalized using the same scale factors as the
corresponding coefficient arrays $C_j(R;\theta)$, so that the ratio in the score
expression is unchanged.
\section{Sensitivity to the Dirichlet concentration parameter}
\label{sec:supp-sensitivity}

The canonical marginal likelihood in the main text sets $\alpha_j=1$ for all $j$, corresponding to a uniform Dirichlet prior on the contaminating probability vector $q$.
Proposition~\ref{prop: collapsed likelihood} gives the marginal likelihood for general~$\alpha_j>0$; the Dirichlet factors $\prod_{j=1}^m(\alpha_j)_{k_j}/(\alpha_0)_K$ modulate how much prior mass is allocated to contamination patterns that concentrate on particular support atoms.
For the symmetric subfamily $\alpha_j=\alpha$, $j=1,\ldots,m$, smaller $\alpha$ concentrates the contaminating distribution on fewer atoms \emph{a priori}, while larger $\alpha$ spreads it more uniformly across the support.
To assess sensitivity, consider the symmetric choice $\alpha_j=\alpha$ with the same $\operatorname{Beta}(1,b)$ prior on $\epsilon$.
Under this choice, the marginal likelihood is proportional, up to a factor independent of $\theta$, to
\begin{equation}
\bar L_{b}^{(\alpha)}(\theta)
=
\sum_{r_1=0}^{N_1}\cdots\sum_{r_m=0}^{N_m}
\Bigg\{
\prod_{j=1}^m
\frac{(\alpha)_{N_j-r_j}}{(N_j-r_j)!\,r_j!}\,
p_j(\theta)^{r_j}
\Bigg\}
\frac{B(1+n-R,b+R)}{B(1,b)\,(m\alpha)_{n-R}},
\label{eqn: supp alpha sensitivity likelihood}
\end{equation}
where $R=\sum_j r_j$.
When $\alpha=1$, the factor $(\alpha)_{N_j-r_j}/(N_j-r_j)!$ equals one and the expression recovers~$\bar L_b(\theta)$.
Table~\ref{tab: supp alpha sensitivity} reports posterior bias, $95\%$ credible interval length, and coverage for the symmetric atom-level choices $\alpha\in\{0.5,1,2\}$ crossed with $b\in\{1,4,9,19,99\}$, in the representative scenario $\theta_{\mathrm{c}}=0.75$, $\epsilon_0=0.20$, $n=300$, $M=20$, $\theta_0=0.30$, averaged over $50$ replications.
Across all $\alpha$ and $b$ combinations, posterior bias is small ($\leq 0.007$) and coverage is $1.00$.
Larger $\alpha$ yields slightly shorter intervals and smaller bias, reflecting the additional prior regularization of the contaminating distribution.
Overall, inference for $\theta$ is insensitive to the Dirichlet concentration in this range.
\begin{table}[t]
\centering
\caption{Sensitivity to the Dirichlet concentration parameter $\alpha$ in the contaminated binomial setting with $\theta_{\mathrm{c}}=0.75$, $\epsilon_0=0.20$, $n=300$, $M=20$ and $\theta_0=0.30$.
Each cell averages over 50 replications.}
\label{tab: supp alpha sensitivity}
\begin{tabular}{cc rrr}
\hline
$\alpha$ & $b$ & Bias & Length & Coverage \\
\hline
0.5 & 1 & +0.007 & 0.067 & 1.00 \\
 & 4 & +0.006 & 0.064 & 1.00 \\
 & 9 & +0.006 & 0.061 & 1.00 \\
 & 19 & +0.006 & 0.059 & 1.00 \\
 & 99 & +0.005 & 0.054 & 1.00 \\
\hline
1 & 1 & +0.004 & 0.059 & 1.00 \\
 & 4 & +0.004 & 0.057 & 1.00 \\
 & 9 & +0.004 & 0.056 & 1.00 \\
 & 19 & +0.004 & 0.054 & 1.00 \\
 & 99 & +0.003 & 0.051 & 1.00 \\
\hline
2 & 1 & +0.002 & 0.052 & 1.00 \\
 & 4 & +0.002 & 0.051 & 1.00 \\
 & 9 & +0.002 & 0.050 & 1.00 \\
 & 19 & +0.002 & 0.049 & 1.00 \\
 & 99 & +0.002 & 0.046 & 1.00 \\
\hline
\end{tabular}
\end{table}

\subsection{Sensitivity to contamination overlap and proportion}

We next examine sensitivity to the separation between the structural and contaminating
distributions and to the contamination proportion.
The canonical Dirichlet choice
$\alpha_j=1$ is used here. Data were generated from
\begin{equation*}
(1-\epsilon_0)\operatorname{Binomial}(20,0.30)
+
\epsilon_0 \operatorname{Binomial}(20,\theta_c),
\end{equation*}
with $\theta_c\in\{0.45,0.60,0.75\}$ and
$\epsilon_0\in\{0.10,0.20,0.30\}$.
Each setting was averaged over 50
replications with $n=300$. The analysis again does not use the contaminating
distribution.
Table~\ref{tab:sensitivity-overlap-compact} gives a compact numerical summary.
The naive binomial analysis has short intervals but rapidly increasing bias and poor
coverage as contamination increases.
The Huber posterior substantially
reduces the bias in every scenario. Coverage is close to nominal or conservative in
the separated and moderately overlapping cases, and remains stable across the examined
values of $b$ in most settings.
The main deterioration occurs in the hardest setting, where the contaminating
distribution is close to the structural distribution and the contamination proportion is large, namely $\theta_c=0.45$ and $\epsilon_0=0.30$.
This behavior is expected since the prior $\epsilon\sim\operatorname{Beta}(1,b)$ places more mass near small
contamination proportions as $b$ increases, equivalently favoring smaller contamination fractions.
When the true contamination proportion is large, a large value of $b$
therefore encourages the posterior to attribute too many observations to the structural component.
This yields shorter intervals and can lead to undercoverage. Thus the
sensitivity is concentrated in the weakly identifiable regime where contamination is
both large and highly overlapping, reflecting weak identification rather than a
numerical instability of the dynamic program.
\begin{table}[t]
\centering
\caption{Sensitivity of posterior inference to contamination overlap $\theta_c$,
contamination proportion $\epsilon_0$, and prior parameter $b$, with $\alpha_j=1$.
Each row averages over 50 replications. Huber columns report the range over
$b\in\{1,4,9,19,99\}$.}
\label{tab:sensitivity-overlap-compact}
\small
\begin{tabular}{cccccc}
\hline
$\theta_c$ & $\epsilon_0$
& Naive bias & Naive coverage
& Huber bias range & Huber coverage range \\
\hline
0.45 & 0.10 & +0.015 & 0.20 & [+0.011,+0.012] & [1.00,1.00] \\
0.45 & 0.20 & +0.031 & 0.00 & [+0.026,+0.026] & [0.98,0.98] \\
0.45 & 0.30 & +0.045 & 0.00 & [+0.041,+0.041] & [0.62,0.82] \\
0.60 & 0.10 & +0.030 & 0.00 & [+0.011,+0.012] & [1.00,1.00] \\
0.60 & 0.20 & +0.061 & 0.00 & [+0.019,+0.020] & [1.00,1.00] \\
0.60 & 0.30 & +0.091 & 0.00 & [+0.024,+0.028] & [0.88,0.92] \\
0.75 & 0.10 & +0.045 & 0.00 &  [+0.001,+0.001] & [1.00,1.00] \\
0.75 & 0.20 & +0.091 & 0.00 & [+0.003,+0.004] & [1.00,1.00] \\
0.75 & 0.30 & +0.137 & 0.00 & [+0.002,+0.003] & [0.96,1.00] \\
\hline
\end{tabular}
\end{table}

Figure~\ref{fig: sensitivity panel} shows the same results as functions of $b$.
Larger $b$ usually shortens the credible intervals slightly.
This shortening is harmless in separated settings, but in the high-overlap, high-contamination case it produces the observed coverage loss.
Overall, the sensitivity study supports the intended conclusion: the Huber marginal likelihood does not depend delicately on a single value of $b$.
However, when substantial contamination is plausible, especially in high-overlap settings, a modest sensitivity grid over $b$ is preferable to fixing a very large value.

\begin{figure}[t]
\centering
\includegraphics[width=1\textwidth]{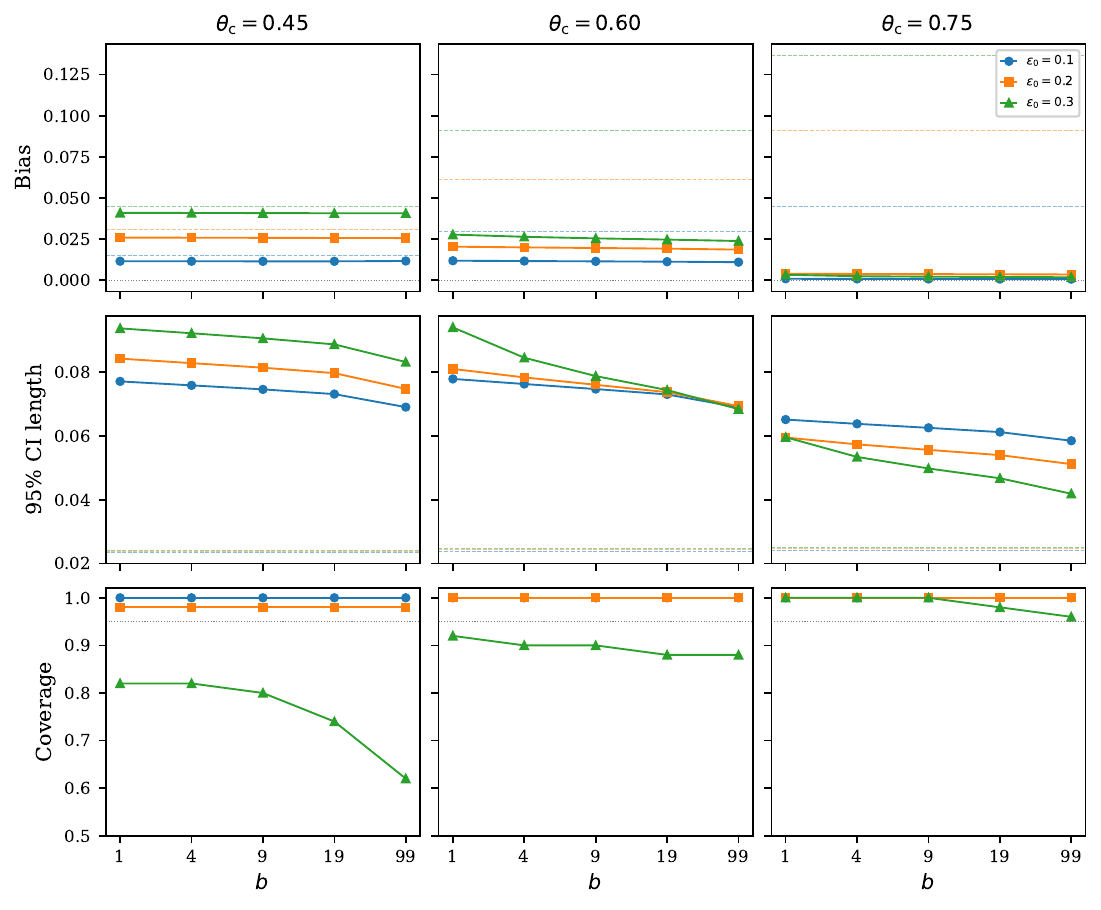}
\caption{Sensitivity of bias, $95\%$ credible interval length and coverage to the prior parameter $b$, across contamination overlaps $\theta_{\mathrm{c}}\in\{0.45,0.60,0.75\}$ and contamination proportions $\epsilon_0\in\{0.1,0.2,0.3\}$, with $\alpha_j=1$.
The top, middle and bottom rows show bias, $95\%$ credible interval length and coverage, respectively.
Dashed lines show the naive binomial baseline. Each point averages over 50 replications with $n=300$, $M=20$ and $\theta_0=0.30$.}
\label{fig: sensitivity panel}
\end{figure}

\section{Binomial example details and diagnostics}
\label{sec:supp-example-diagnostics}

For the contaminated bounded-count example in Section~\ref{sec: example} of the main text, the structural model is $P_\theta=\operatorname{Binomial}(M,\theta)$ on $\{0,\ldots,M\}$, with
\begin{equation}
p_j(\theta)={M\choose j}\theta^j(1-\theta)^{M-j},
\qquad j=0,\ldots,M.
\end{equation}
The marginal likelihood is \eqref{eqn: sensitivity likelihood} with support size $m=M+1$.
For the Metropolis-adjusted Langevin sampler, the score recursion in Section~\ref{sec:supp-computation} uses
\begin{equation}
\frac{\partial}{\partial\theta}p_j(\theta)
=
p_j(\theta)
\Big(
\frac{j}{\theta}
-
\frac{M-j}{1-\theta}
\Big),
\qquad j=0,\ldots,M,
\end{equation}
with the usual limiting interpretation at support points where $j=0$ or $j=M$.
All reported posterior summaries are based on the marginal likelihood and its analytic dynamic-programming score.
\begin{table}[t]
\centering
\caption{Posterior summaries for the structural parameter in the contaminated binomial example with $n=300$, $M=20$, $\theta_0=0.30$ and $\epsilon_0=0.20$.
The row $b=1$ corresponds to the uniform prior on $\epsilon$.}
\label{tab: supp binomial example}
\begin{tabular}{lcc}
\hline
Method & Posterior mean of $\theta$ & 95\% credible interval \\
\hline
Naive binomial & 0.412 & (0.400,\,0.425) \\
Huber marginal, $b=1$ & 0.307 & (0.281,\,0.333) \\
Huber marginal, $b=4$ & 0.307 & (0.282,\,0.332) \\
Huber marginal, $b=9$ & 0.307 & (0.282,\,0.331) \\
Huber marginal, $b=19$ & 0.307 & (0.283,\,0.332) \\
Huber marginal, $b=99$ & 0.307 & (0.285,\,0.329) \\
Oracle clean analysis & 0.305 & (0.292,\,0.319) \\
\hline
\end{tabular}
\end{table}

The one-dimensional posterior for $\theta$ was sampled using a Metropolis-adjusted Langevin algorithm on the logit scale, with the analytic dynamic-programming score.
Because this example has a scalar structural parameter, we also evaluated the marginal likelihood directly on a fine grid over $\theta\in(0,1)$ as an independent check.
The resulting posterior means and credible intervals agreed with the Markov chain output to the reported precision.
Trace plots and effective sample sizes for the reported $b$ values are shown in Figure~\ref{fig: mala diagnostics}.
\begin{figure}[t]
    \centering
    \includegraphics[width=0.6\linewidth]{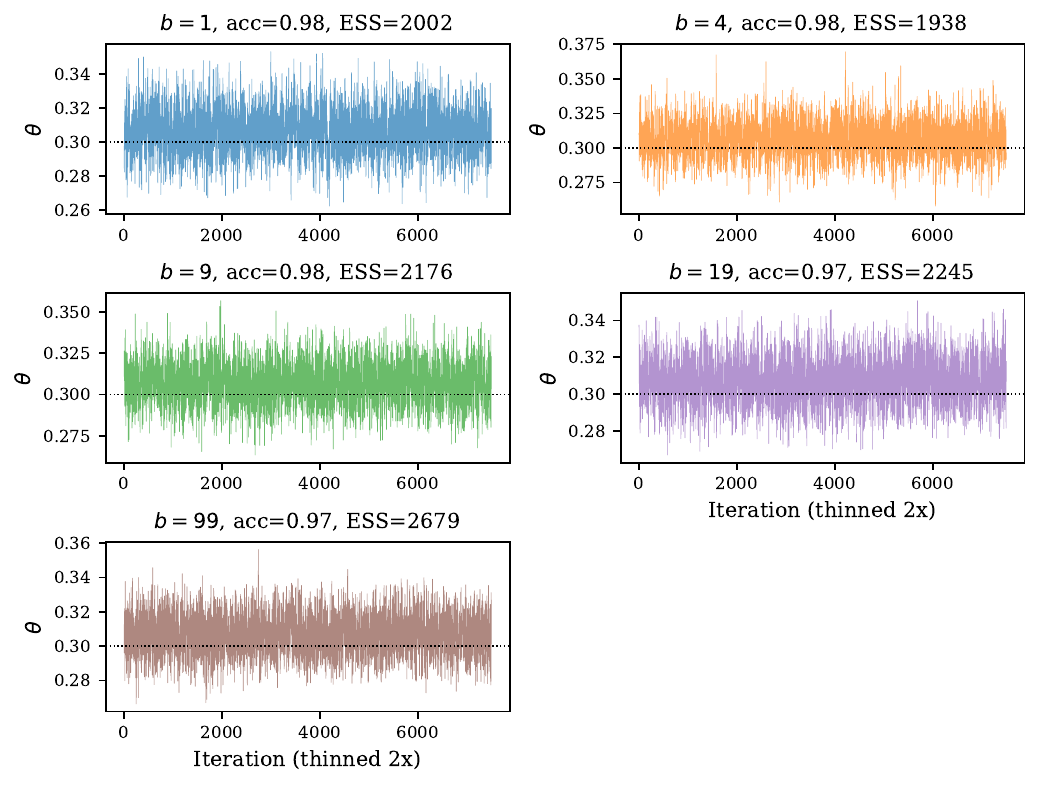}
    \caption{Trace plots of the Metropolis-adjusted Langevin sampler for different values of $b$.}
    \label{fig: mala diagnostics}
\end{figure}

\clearpage

\bibliographystyle{plainnat}
\bibliography{ref}

\end{document}